\newtheorem{theorem}{Theorem}
\newtheorem{lemma}{Lemma}
\theoremstyle{definition}
\newtheorem{definition}{Definition}
\def\BibTeX{{\rm B\kern-.05em{\sc i\kern-.025em b}\kern-.08em
    T\kern-.1667em\lower.7ex\hbox{E}\kern-.125emX}}
\begin{document}

\title{
Harnessing Inherent Noises for Privacy Preservation in Quantum Machine Learning
}

\author{
\IEEEauthorblockN{Keyi Ju$^{1}$, Xiaoqi Qin$^{1}$, Hui Zhong$^{2}$, Xinyue Zhang$^{3}$, Miao Pan$^{2}$, Baoling Liu$^{1}$}
\IEEEauthorblockA{$^{1}$State Key Laboratory of Networking and Switching Technology,\\ Beijing University of Posts and Telecommunications, Beijing, China, 100876}
\IEEEauthorblockA{$^{2}$Department of Electrical and Computer Engineering, University of Houston, Houston, TX, 77204}
\IEEEauthorblockA{$^{3}$Department of Computer Science, Kennesaw State University, Marietta, GA, 30060}
}

\maketitle

\begin{abstract}
Quantum computing revolutionizes the way of solving complex problems and handling vast datasets, which shows great potential to accelerate the machine learning process. However, data leakage in quantum machine learning (QML) may present privacy risks. Although differential privacy (DP), which protects privacy through the injection of artificial noise, is a well-established approach, its application in the QML domain remains under-explored. In this paper, we propose to harness inherent quantum noises to protect data privacy in QML. Especially, considering the Noisy Intermediate-Scale Quantum (NISQ) devices, we leverage the unavoidable shot noise and incoherent noise in quantum computing to preserve the privacy of QML models for binary classification. We mathematically analyze that the gradient of quantum circuit parameters in QML satisfies a Gaussian distribution, and derive the upper and lower bounds on its variance, which can potentially provide the DP guarantee. Through simulations, we show that a target privacy protection level can be achieved by running the quantum circuit a different number of times.
\end{abstract}

\begin{IEEEkeywords}
Quantum machine learning, Quantum differential privacy, Shot noise, Incoherent noise
\end{IEEEkeywords}

\section{Introduction}
Quantum computing, which exploits principles of quantum mechanics such as superposition and entanglement, offers a great potential for significantly faster complex calculations than classical computers.
In the Noisy Intermediate-Scale Quantum (NISQ) era, variational quantum algorithms (VQAs) are considered the best for quantum computing, 
which heavily rely on variational quantum circuits (VQCs). VQCs are parameterized quantum circuits designed to be optimized iteratively to minimize specific cost functions, enabling the exploration of solution spaces and the identification of optimal quantum states. Based on the advance of these technologies, quantum computing is becoming a powerful alternative to conduct computing intensive machine learning or optimization tasks~\cite{ramezani2020machine,ajagekar2019quantum}.


Quantum machine learning (QML) has demonstrated advantages over classical machine learning (CML) in dealing with large datasets and complex problems \cite{biamonte2017quantum}. 
Due to its high computing efficiency, QML has the potential to be used in various areas such as nanoparticle synthesis and
biomedical domain.
Nevertheless, the datasets used in these applications may be highly sensitive, e.g., CT scanned images of COVID-19 patients~\cite{sengupta2021quantum}. Therefore, similar to data privacy concerns in CML, privacy preservation is crucial in QML to safeguard sensitive data from potential breaches and misuse.


In CML scenarios, we integrate differential privacy (DP) techniques to protect sensitive data and mitigate privacy risks during data processing and analysis. DP is a concept in data science and statistics that aims to provide a means of statistical database privacy protection proposed by~\cite{dwork2014algorithmic}. It offers a way to maximize the accuracy of queries from statistical databases while minimizing the chances of identifying information about specific individuals within the database. By introducing controlled amounts of statistical noise to the data, DP ensures that the presence or absence of specific records does not significantly affect the results, thereby safeguarding individual privacy. One of the most popular DP definitions is ($\epsilon,\delta$)-DP, where $\epsilon$ represents the maximum allowed change in output due to the addition or removal of an individual's data, and $\delta$ indicates an upper bound on the probability of the algorithm's privacy violation. 
Abadi \textit{et al.} in~\cite{abadi2016deep} proposed a differentially private stochastic gradient descent (DP-SGD) algorithm, which can provide a strict privacy guarantee in machine learning by adding Gaussian noise to gradients.

Building on this foundation, research efforts have been dedicated to generalizing DP for quantum computing scenarios. Aiming to preserve the privacy of quantum data and computations, 
Watkins \textit{et al.} in~\cite{watkins2023quantum} proposed a method to protect the data privacy of QML models by adding artificial Gaussian noise to the gradients using the DP-SGD algorithm.
However, this method largely ignores the inherent noises in quantum circuits, which could potentially offer a privacy protection guarantee in QML. Apart from them, we find that DP can also be achieved with shot noise that are inherent in quantum measurements. The shot noise are statistical fluctuations in the measurement outcomes of quantum bits (qubits) due to the discrete nature of particles.
Fluctuations caused by shot noise are intrinsic to the quantum nature of particles and result from the probabilistic behavior of quantum systems. Shot noise affects the output of quantum circuits and makes it Gaussian distributed, potentially preserving the privacy of QML models.

In this paper, we theoretically study the impacts of shot noise on the privacy budget of $(\epsilon,\delta)$-DP. Meanwhile, considering that QML is severely affected by incoherent noise, we propose to employ the quantum error mitigation (QEM) method 
to mitigate the incoherent noise, and investigate the impacts of the QEM method on the degree of privacy preservation. The methods of QEM include zero noise extrapolation, probabilistic error cancellation (PEC) \cite{cai2022quantum}, etc. In this study, we utilize PEC as our QEM method, since the analytic representation helps us analyze the impact of QEM method on the quantum circuit outputs. To the best of our knowledge, we are the first to utilize the inherent noises of quantum circuits to achieve DP preservation in QML.
The contributions of this paper are three-fold as summarized below:
\begin{itemize}
    \item With shot noise, incoherent noise, and the PEC method, we show that the outputs of a quantum circuit satisfy Gaussian distribution, and provide its analytical expressions.
    \item We further prove that the gradient of QML also satisfies Gaussian distribution, when performing a binary classification task with a modified hinge loss function and considering the inherent noise. We also give upper and lower bounds on the variance of the Gaussian distribution, which is closely related to the privacy budget $\epsilon$ and the privacy violation probability $\delta$.    
    \item We conduct extensive simulations on the quantum simulator, and the results demonstrate the inherent noises in quantum circuits can be harnessed to implement DP. Moreover, we show that under a certain $\delta$, the privacy budget $\epsilon$ can be achieved by running the quantum circuit a different number of times.
\end{itemize}

The rest of the paper is organized as follows. Section \ref{sec:dp} introduces the background knowledge of DP and DP-SGD algorithm. We give analyze the effect of shot noise, incoherent noise, and the PEC method on the gradient of QML and further discuss the relationships between the distribution of the gradient and the privacy budget of DP in Section \ref{Sec.3}. Then, we present the experiment study in Section \ref{EVALUATION}. Finally, we conclude our paper in Section \ref{Sec.5}.

\section{Differential Privacy and DP-SGD Algorithm}\label{sec:dp}
In this section, we introduce the basic concept of DP and the DP-SGD algorithm proposed by Abadi \textit{et al.} in~\cite{abadi2016deep} achieving DP of machine learning models.
\subsection{Differential Privacy}
DP~\cite{dwork2008dp} is a method for preserving the privacy of a sensitive dataset while maintaining statistical information of the dataset. Intuitively, given a private dataset, a DP mechanism can protect individual data privacy even though statistics of the dataset are published. The formal definition of $(\epsilon,\delta)$-DP is shown as follows.
\begin{definition}[\bf{Differential Privacy}]
A randomized algorithm $\mathcal{M}$ satisfies $(\epsilon,\delta)$-DP if for any two adjacent datasets $\mathcal{D}$ and $\mathcal{D}^\prime$ that differ in only a single record, and for all $S \subseteq Range(\mathcal{M})$, we have 
  \begin{equation}
        \text{P}[\mathcal{M}(\mathcal{D})\in S] \leq e^{\epsilon}\text{P}[\mathcal{M}(\mathcal{D}^\prime)\in S]+\delta.
  \end{equation}
\end{definition}
\noindent $(\epsilon,\delta)$-DP ensures that for all adjacent datasets $\mathcal{D}$ and $\mathcal{D}^\prime$, the absolute value of the privacy loss will be bounded by $\epsilon$ with probability at least $1-\delta$, and the privacy loss is defined as:
\begin{equation}
    \mathcal{L}^x_{\mathcal{M}({\mathcal{D})}||\mathcal{M}({\mathcal{D}^\prime})}=\text{ln}(\frac{\text{P}[\mathcal{M}({\mathcal{D})=x]}}{\text{P}[\mathcal{M}(\mathcal{D}^\prime)=x]}).
    \label{eq:privacyloss}
\end{equation}
If $\delta=0$, the randomized algorithm $\mathcal{M}$ is said to have $\epsilon$-DP.
A generic method of achieving $(\epsilon,\delta)$-DP is the Gaussian mechanism that adds Gaussian noise from the distribution $\mathcal{N}(0,\sigma^2)$, calibrated to the sensitivity. The sensitivity captures the maximum difference in the output of the dataset caused by a single record in the worst case. We define sensitive more formally below and then we describe the Gaussian mechanism.
\begin{definition}[\bf $l_p$-sensitive]
    Let $f$ be a query mapping from the space of datasets to $\mathbb{R}^m$, Let $N$ be the
set of all possible pairs of neighboring datasets. For a fixed positive scalar $p$, the $l_p$-sensitivity of $f$ is defined by
\begin{equation}
    s(f;p)=\max_{D,D'\in N}\|f(D)-f(D')\|_p.
\end{equation}
\end{definition}

\noindent\textbf{Gaussian mechanism.} Given the output of the query $f(D)\in\mathbb{R}^m$, the Gaussian mechanism adds noise sampled from the Gaussian distribution to each of the $m$ dimensions in the output, with the variance of the noise calibrated to the $l_p$-sensitivity. The Gaussian mechanism cannot guarantee pure $\epsilon$-DP but can instead ensure approximate $(\epsilon, \delta)$-DP. The Gaussian mechanism is commonly used in machine learning~\cite{ponomareva2023dp}, e.g., it is the main mechanism behind DP-SGD, which we are going to explain in Section~\ref{subsec:dpsgd}.
\subsection{DP-SGD Algorithm}\label{subsec:dpsgd}
DP-SGD is a privacy-preserving optimization algorithm designed to train machine learning models while ensuring individual data point privacy. The algorithm achieves this by incorporating DP directly into the stochastic gradient descent optimization process. Mathematically, DP-SGD modifies the gradient computation step by adding carefully calibrated noise, ensuring that the impact of any single data point on the model parameters remains statistically indistinguishable. The updated gradient computation can be expressed as:
\begin{equation}
    \tilde{\nabla}\ell(w)=\frac{\nabla \ell(w)}{\max(1,\frac{||\nabla \ell(w)||_2}C)}+\mathcal{N}(0,\sigma^2C^2I),
\end{equation}
where $\nabla \ell(w)$ is the gradient of the loss with respect to the model parameters for the data point. $C$ is the clipping norm, which bounds the influence of each example on the gradient and $\mathcal{N}(0,\sigma^2C^2I)$ represents the added noise following a Gaussian distribution with mean 0 and covariance $\sigma^2C^2I$.
The privacy cost is typically quantified using the $(\epsilon, \delta)$-DP definition. DP-SGD achieves DP by bounding the ratio of probabilities of any two adjacent datasets.
Incorporating these modifications into the standard stochastic gradient descent update rule, the DP-SGD algorithm provides a robust and privacy-preserving approach to training machine learning models on sensitive datasets.

The current method for realizing DP in QML was proposed by Watkins \textit{et al.} in~\cite{watkins2023quantum}, they artificially add Gaussian noise to the gradient like DP-SGD. In the next section, we additionally consider shot noise and incoherent noise, and we prove that the gradient satisfies a Gaussian distribution in a binary classification task using a modified hinge loss function, which can potentially achieve $(\epsilon,\delta)$-DP.
\section{NOISY GRADIENTS OF QUANTUM CIRCUIT PARAMETERS}\label{Sec.3}
In this section, we consider the binary classification task with modified hinge loss function on classical dataset. We leave the case of arbitrary loss functions for multi-classification tasks in the future work. The binary classification task takes $\mathcal{D}=\{(\mathbf{x}_i,\mathbf{y}_i)\in\mathcal{X}\times\mathcal{Y}:i=1,2,\ldots,m\}$ as input training data which consists of $m$ data-label pairs, where the data space $\mathcal{X}=\mathbb{R}^d$ and the label set $\mathcal{Y}\in\{-1,1\}$. We would like to train a classifier $\mathbf{g}:\mathcal{X}\to\mathcal{Y}$. We measure the quality of our classifier on the training data via a loss function $\ell:\mathcal{Y}\times\mathcal{Y}\to\mathbb{R}$. In empirical risk minimization (ERM), we choose a classifier $\mathbf{g}$ that minimizes the empirical loss:
\begin{equation}
J(\mathbf{g},\mathcal{D})=\frac1m\sum_{i=1}^m\ell(\mathbf{g}(\mathbf{x}_i),\mathbf{y}_i),
\end{equation}
where $\mathbf{g}(\mathbf{x}_i)$ is the output of the QNN (i.e. the expectation of an observable) under classical input data $\mathbf{x}_i$ and is parameterized by $\boldsymbol{\theta} = \{\theta_1,\theta_2,\cdots$\}. We will show that partial derivatives of the loss function $\nabla_\theta \ell$ satisfy the Gaussian distribution under the effect of shot noise, incoherent noise and the PEC method. Therefore, it can potentially achieve DP in QML.
\subsection{The Impact of Shot Noise and Incoherent Noise on the Output in Quantum Circuits}
Given a QNN, the output has an expectation value of ${\rm Tr}(\mathcal{A}\rho)$, where $\rho$ is the output state of the quantum circuit, ${\rm Tr}(\cdot)$ is the trace function of a matrix and $\mathcal{A}$ is the observable of interests, whose eigenvalues are $\lambda_1,\lambda_2,...,\lambda_{2^q}$. Here, $q$ is the number of qubits to be observed. Assuming that after running the quantum circuit $n$ times, we can get the observation that $\mathcal{O}^n=\{\mathbf{o}_1,\mathbf{o}_2,...,\mathbf{o}_n\}$, where $\mathbf{o}_1,\mathbf{o}_2,...,\mathbf{o}_n$ are independently and identically distributed random variables. The probability that $\mathbf{o}_n=\lambda_k$ is $p_k$, which depends on the output state $\rho$. Consequently, the expectation value obtained by running the quantum circuit can be expressed as a random variable $O=\frac{\sum_{i=1}^n o_i}{n}$, where $o_i$ denotes the value of the random variable $\mathbf{o}_i$. According to the Lindeberg-Lévy central limit theorem~\cite{billingsley2017probability}, we assume that $O$ follows Gaussian distribution. It implies that each time a quantum computation is performed, the output value $\mathbf{g}(\mathbf{x}_i)$ is a random variable that follows Gaussian distribution.

In addition to shot noise, there is also inevitable incoherent noise in quantum circuits, which affects the variance of the output $O$. The wide variety of incoherent noise, coupled with the different specific forms of quantum circuits, makes it difficult to study the effect of such noise on the output $O$. Therefore, we approximate the incoherent noise in quantum circuits using the global depolarizing noise. It has been proven that global depolarizing noise successfully captures the effects of realistic noise and can serve as a phenomenological model~\cite{qin2023error}. The global depolarizing noise can be expressed as:
\begin{equation}
    \mathcal{E}^{\otimes q}_\text{Dep}(\rho)=(1-p_t)\rho+p_t\frac{I^{\otimes q}}{2^q},
    \label{globalDepolarizing}
\end{equation}
where $p_t$ is the effective total error probability and can be obtained by solving a quadratic equation~\cite{vovrosh2021simple}:
\begin{equation}
    \mathrm{Tr}\big[\rho^2\big]=(1-p_{t})^2+\frac{p_{t}(1-p_{t})}{2^{n-1}}+\frac{p_{t}^2}{2^n}.
\end{equation}
The value of $\mathrm{Tr}\big[\rho^2\big]$ can be measured directly on the device. Because $p_{t}$ affects the output quantum state, it affects the probability $p_k$ that $\lambda_k$ occurs. Therefore, the variance of the noise output $\hat{O}$ can be expressed as a function of the global depolarizing noise error rate $p_t$, the noiseless output quantum state $\rho$ and the number of executions $n$ of the quantum circuit:
\begin{equation}
    \text{Var}[\hat{O}]=h(p_t,\rho, n).
    \label{minmax}
\end{equation}
Without dealing with incoherent noise, the quantum circuit output results will be highly biased, which can have serious implications for the performances of QML. It will be shown in Section~\ref{4-A}. To address the effect of incoherent noise on QML, in the next section we consider the use of the PEC method and analyze its effect on the Gaussian distributed output $\hat{O}$.

\subsection{Analyzing the Variance of the Gaussian Distributed Output with Probabilistic Error Cancellation}\label{3-B}
The main idea of PEC is to represent each ideal gate $\mathcal{G}_i$ of the quantum circuit as a linear combination of noisy implementable operations $\{\mathcal{U}_\alpha\}$:
\begin{equation}
\mathcal{G}_i=\sum\limits_{\alpha}\eta_{i,\alpha}\mathcal{U}_{i,\alpha}, 
\label{noiseOperation}
\end{equation}
\noindent where $\eta_{i,\alpha}\in\mathbb{R}$, and $\sum\limits_\alpha\eta_{i,\alpha}=1$. The real coefficients $\eta_{i,\alpha}$ form a quasi-probability distribution. We can obtain the ideal expectation value as a linear combination of noisy expectation values:
\begin{equation}
\begin{split}
    \langle\mathcal{A}\rangle_\text{ideal} & = \sum_{\Vec{\alpha}}\eta_{\Vec{\alpha}}\text{Tr}[\mathcal{A}\Phi_{\Vec{\alpha}}(\rho)]\\
    & = \sum_{\Vec{\alpha}}\eta_{\Vec{\alpha}}\langle\mathcal{A}_{\Vec{\alpha}}\rangle_\text{noisy},
    \label{eq:pecIdeal}
\end{split}
\end{equation}
\noindent where $\Vec{\alpha}=(\alpha_1,\alpha_2,...,\alpha_t)$ is the multi-index, $\Phi_{\Vec{\alpha}}=\mathcal{U}_{t,\alpha_t}\circ\cdots\circ\mathcal{U}_{2,\alpha_2}\circ\mathcal{U}_{1,\alpha_1}$, $\eta_{\Vec{\alpha}}=\prod_{i=1}^t\eta_{i,\alpha_i}$. 
The noise expectation values $\langle\mathcal{A}_{\Vec{\alpha}}\rangle_\text{noisy}$ can be measured directly. By combining all the noise expectation values, the desired result $\langle\mathcal{A}\rangle_\text{ideal}$ can be calculated.

Inevitably, we need to investigate how PEC affects the Gaussian distribution output $O$ of the quantum circuit. We denote $\Tilde{O}$ as the output random variable after using PEC. From Eq.~\eqref{eq:pecIdeal}, the error mitigated expectation value $\mathbb{E}[{\Tilde{O}}]$ is a linear sum of the
expectation values of a set of $K$ random variables $\{X_i\}$, which are the outputs of the set of measurement circuits, and $K$ corresponds to the multi-index $\Vec{\alpha}$. Therefore, $\mathbb{E}[\Tilde{O}]$ can be expressed as:
\begin{equation}
    \mathbb{E}[\Tilde{O}]=\sum\limits_{i=1}^K\eta_i\mathbb{E}[X_i]=\mathbb{E}[O],
    \label{unbiased}
\end{equation}
\noindent where $\eta_i$ are real coefficients. If we estimate individual terms $\mathbb{E}[X_i]$ up to a certain precision, then combine the results, the variance of the error mitigation estimator $\Tilde{O}$ can be expressed as:
\begin{equation}
    \text{Var}[\Tilde{O}]=\sum\limits_{i=1}^K|\eta_i|^2\text{Var}[X_i].
\end{equation}
Because the component of random variables $X_i$ are generated from circuits that are variants of the primary circuit, they can be expected to have a similar variance as the unmitigated estimator $\hat{O}$~\cite{cai2022quantum}. This relationship can be expressed as $\text{Var}[X_i]\sim\text{Var}[\hat{O}]$, and hence we have:
\begin{equation}
    \text{Var}[\Tilde{O}]=(\sum\limits_{i=1}^K|\eta_i|^2)\text{Var}[\hat{O}].
    \label{varianceAfterQEM}
\end{equation}

So far we have given analytical expressions for the mean and variance of a error mitigated quantum circuit's output $\Tilde{O}$. In the next section, we will analyze how much privacy budget can be given to a quantum circuit.

\subsection{Quantum Inherent Noise Mechanism}\label{3-C}
In this subsection, we propose a novel DP mechanism called the quantum inherent noise mechanism, which utilizes the inherent noise in the quantum circuit to achieve $(\epsilon,\delta)$-DP.
The partial derivative of the loss function $\nabla_\theta \ell$ can often be expressed as a linear combination of two quantum circuit outputs. The two quantum circuits use the same architecture, differing only in $\pm\frac\pi 2$ shift of the parameter. Here, we use parameter-shift rules~\cite{mitarai2018quantum} to compute partial derivatives of observable $\mathcal{A}$ with respect to the parameter $\theta$:
\begin{equation}
    \frac{\partial\langle \mathcal{A}\rangle}{\partial\theta}=\frac12\left[\mathbf{g}(\mathbf{x}_i;\theta+\frac\pi2))-\mathbf{g}(\mathbf{x}_i;\theta-\frac\pi2))\right].
    \label{parameter-shift rules}  
\end{equation}
If we consider using modified hinge loss function $\ell(\mathbf{g}(\mathbf{x}_i),\mathbf{y}_i)=1-\mathbf{y}_i\mathbf{g}(\mathbf{x}_i)$, the gradient of the parameter $\theta$ with respect to the data point $(\mathbf{x}_i,\mathbf{y}_i)$ can be expressed as:
\begin{equation}
    \nabla_\theta \ell=\pm\frac12\left[\mathbf{g}(\mathbf{x}_i;\theta+\frac\pi2)-\mathbf{g}(\mathbf{x}_i;\theta-\frac\pi2)\right].
    \label{hingloss}
\end{equation}
We mentioned earlier that $\mathbf{g}(\mathbf{x}_i;\theta)$ is mitigated quantum circuits output satisfying Gaussian distribution. Since $\mathbf{g}(\mathbf{x}_i;\theta+\frac\pi2)$ and $\mathbf{g}(\mathbf{x}_i;\theta-\frac\pi2)$ are independent Gaussian random variables, their linear combination $\nabla_\theta \ell$ also satisfies Gaussian distribution, and we have the following Lemma.
\begin{lemma}
    The gradient of the loss function in QML model satisfies a Gaussian distribution with the variance $\text{Var}[\nabla_\theta \ell]$ that is bounded by:
    \begin{equation}
       \frac{\sum\limits_{i=1}^K|\eta_i|^2}{2}h_\text{min} \leq \text{Var}[\nabla_\theta \ell] \leq \frac{\sum\limits_{i=1}^K|\eta_i|^2}{2}h_\text{max},
       \label{upperandlower}
    \end{equation}
    where $h_\text{min}=\min\limits_\rho h(p_t,\rho,n)$ and $h_\text{max}=\max\limits_\rho h(p_t,\rho,n)$.
    \label{lemma:1}
\end{lemma}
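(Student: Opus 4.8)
The plan is to build directly on the variance formula for the error-mitigated output established in Eq.~\eqref{varianceAfterQEM}, together with the closure of Gaussian distributions under independent affine combinations. First I would invoke the observation stated just before the lemma: both $\mathbf{g}(\mathbf{x}_i;\theta+\frac{\pi}{2})$ and $\mathbf{g}(\mathbf{x}_i;\theta-\frac{\pi}{2})$ are PEC-mitigated circuit outputs, so each is Gaussian by the central-limit-theorem argument of Section~\ref{3-B}, and they arise from independent circuit executions. Since an affine combination of independent Gaussian random variables is again Gaussian, the parameter-shift expression in Eq.~\eqref{hingloss} immediately yields that $\nabla_\theta \ell$ is Gaussian, which settles the distributional part of the claim. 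The leading $\pm$ sign is irrelevant to the variance because $\text{Var}[-X]=\text{Var}[X]$.

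Next I would compute the variance explicitly. By independence of the two shifted outputs the covariance term drops out, giving
\begin{equation}
\text{Var}[\nabla_\theta \ell]=\frac14\Bigl(\text{Var}\bigl[\mathbf{g}(\mathbf{x}_i;\theta+\tfrac{\pi}{2})\bigr]+\text{Var}\bigl[\mathbf{g}(\mathbf{x}_i;\theta-\tfrac{\pi}{2})\bigr]\Bigr).
\end{equation}
Writing $\rho_+$ and $\rho_-$ for the two output states produced by the shifted parameters and applying Eq.~\eqref{varianceAfterQEM} to each, every term equals $(\sum_{i=1}^K|\eta_i|^2)\,h(p_t,\rho_\pm,n)$, where the quasi-probability normalization factor $\sum_{i=1}^K|\eta_i|^2$ is common to both circuits since they share the same gate architecture and noise model and differ only in a single parameter value. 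Hence
\begin{equation}
\text{Var}[\nabla_\theta \ell]=\frac{\sum_{i=1}^K|\eta_i|^2}{4}\bigl[h(p_t,\rho_+,n)+h(p_t,\rho_-,n)\bigr].
\end{equation}

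Finally I would close the bounds by replacing each $h(p_t,\rho_\pm,n)$ with its extremal values over $\rho$. Using $h_\text{min}\le h(p_t,\rho_\pm,n)\le h_\text{max}$ for both states and summing gives $2h_\text{min}\le h(p_t,\rho_+,n)+h(p_t,\rho_-,n)\le 2h_\text{max}$; multiplying through by $\frac{\sum_{i=1}^K|\eta_i|^2}{4}$ produces exactly the inequalities in Eq.~\eqref{upperandlower}.

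The computation is routine once the preceding results are granted, so the delicate points worth making explicit are the two modeling assumptions rather than any hard estimate: first, that the outputs at $\theta+\frac{\pi}{2}$ and $\theta-\frac{\pi}{2}$ are genuinely independent, so that the cross term in the variance vanishes and the clean sum-of-variances formula applies; and second, that the PEC factor $\sum_{i=1}^K|\eta_i|^2$ is shared across the two shifted circuits. I expect the independence assumption to be the main place where one should be careful, since it rests on the two circuits being run as separate measurement experiments rather than reusing the same shots.
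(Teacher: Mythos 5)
Your proposal is correct and follows essentially the same route as the paper's own proof: decompose $\text{Var}[\nabla_\theta \ell]$ via independence of the two parameter-shifted outputs, substitute Eq.~\eqref{varianceAfterQEM} and Eq.~\eqref{minmax}, and bound $h(p_t,\rho^{\theta\pm\frac{\pi}{2}},n)$ by $h_\text{min}$ and $h_\text{max}$. Your explicit flagging of the independence assumption and the shared PEC factor is a welcome clarification of steps the paper leaves implicit, but it does not change the argument.
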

\begin{proof}
Using the fact $\Tilde{O}^\theta=\mathbf{g}(\mathbf{x}_i;\theta)$ and combining Eq.~(\ref{minmax})(\ref{varianceAfterQEM})(\ref{hingloss}), we can deduce that:
\begin{equation}
\begin{split}
    &\text{Var}[\nabla_\theta \ell]\\
    &=\frac14\text{Var}[\mathbf{g}(\mathbf{x}_i;\theta+\frac\pi2)]+\frac14\text{Var}[\mathbf{g}(\mathbf{x}_i;\theta-\frac\pi2)]\\
    &=\frac14\text{Var}[ \Tilde{O}^{\theta+\frac\pi2}]+\frac14\text{Var}[\Tilde{O}^{\theta-\frac\pi2}]\\
    &=\frac{\sum\limits_{i=1}^K|\eta_i|^2}{4}\text{Var}[\hat{O}^{\theta+\frac\pi2}]+\frac{\sum\limits_{i=1}^K|\eta_i|^2}{4}\text{Var}[\hat{O}^{\theta-\frac\pi2}]\\
    &=\frac{\sum\limits_{i=1}^K|\eta_i|^2}{4}h(p_t,\rho^{\theta+\frac\pi2},n)+\frac{\sum\limits_{i=1}^K|\eta_i|^2}{4}h(p_t,\rho^{\theta-\frac\pi2},n).
    \nonumber
    \label{var min}
\end{split}
\end{equation}
Since the function $h$ represents the noise output's variance, it is a bounded function. We can obtain the lower bounds $h_\text{min}$ and upper bounds $h_\text{max}$ on $h$. Bringing the minimum and maximum values into Eq.~\eqref{var min}, we can obtain lower and upper bounds on $\text{Var}[\nabla_\theta \ell]$.
\end{proof}
From Lemma \ref{lemma:1}, we can bound the variance of the Gaussian distributed gradients $\text{Var}[\nabla_\theta \ell]$. Combining an auxiliary result in~\cite{xu2020adaptive}, we can use the lower bound on $\text{Var}[\nabla_\theta \ell]$ to compute privacy budget $\epsilon$ and the protection failure probability $\delta$ of DP, which will be shown in Theorem~\ref{theorem:1}. In our proposed quantum inherent noise mechanism, we harness the shot noise and incoherent noise as a free noise resource in the quantum circuit to achieve $(\epsilon,\delta)$-DP instead of injecting extra artificial noise into the quantum circuits.

\begin{theorem}
    Suppose mechanism $M(D)=f(D)+Z$, where $D$ is the input dataset, $f(\cdot)$ is a $m$-dimension function, noise $Z\sim \mathcal{N}(0,\sigma_*^2)$ and $\sigma_*=\sigma_\text{min}$, is $(\epsilon,\delta)$-DP, then quantum inherent noise mechanism $M^\prime (D)=f(D)+Z^\prime$, where $Z^\prime=(z_{1},\ldots,z_{m})^{T}$, $\forall i\in[m],z_i\sim \mathcal{N}(0,\sigma_i^2)$ and $\sigma_i\in[\sigma_\text{min},\sigma_\text{max}]$ is also $(\epsilon,\delta)$-DP.
    \label{theorem:1}
\end{theorem}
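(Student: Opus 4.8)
The plan is to realize the heterogeneous-variance mechanism $M'$ as a post-processing of the homogeneous baseline mechanism $M$, and then invoke the closure of $(\epsilon,\delta)$-DP under post-processing. The guiding intuition is that every coordinate of $M'$ carries \emph{at least} as much Gaussian noise as the corresponding coordinate of $M$, and adding more noise can only help privacy.

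First I would make the baseline mechanism $m$-dimensional and isotropic, writing $M(D)=f(D)+W$ with $W\sim\mathcal{N}(0,\sigma_\text{min}^2 I_m)$, so that the hypothesis gives $(\epsilon,\delta)$-DP for this isotropic Gaussian mechanism. The key algebraic step is the stability of the Gaussian family under convolution: since $\sigma_i\geq\sigma_\text{min}$ for every $i$, each target coordinate variance decomposes as $\sigma_i^2=\sigma_\text{min}^2+(\sigma_i^2-\sigma_\text{min}^2)$ with a nonnegative remainder. Hence I can couple the noise as $z_i=w_i+v_i$, where $w_i\sim\mathcal{N}(0,\sigma_\text{min}^2)$ and $v_i\sim\mathcal{N}(0,\sigma_i^2-\sigma_\text{min}^2)$ are independent; by the convolution identity $z_i$ indeed has the prescribed law $\mathcal{N}(0,\sigma_i^2)$. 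Collecting coordinates, $Z'=W+V$ with $V=(v_1,\ldots,v_m)^T$ independent of $W$ and, crucially, independent of the dataset $D$.

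With this coupling in hand, I would write $M'(D)=f(D)+Z'=(f(D)+W)+V=M(D)+V$. Because $V$ is drawn from a fixed distribution that does not depend on $D$, the map $y\mapsto y+V$ is a legitimate randomized, data-independent post-processing of the output of $M$. Invoking the post-processing property of differential privacy, the $(\epsilon,\delta)$-DP guarantee of $M$ transfers verbatim to $M'$. This is precisely the kind of auxiliary statement referenced through~\cite{xu2020adaptive}, and the argument above is the direct route to it.

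The only point requiring care, and the main obstacle, is justifying that the extra noise $V$ is genuinely independent of $D$ and therefore qualifies as post-processing rather than as part of the mechanism's interaction with the data; this is immediate here because the $\sigma_i$ are fixed by hardware and circuit-execution parameters $(p_t,\rho,n)$ through the bounds of Lemma~\ref{lemma:1}, not by the records of $D$. A secondary check is that the remainder variances $\sigma_i^2-\sigma_\text{min}^2$ are nonnegative, which is exactly guaranteed by the constraint $\sigma_i\in[\sigma_\text{min},\sigma_\text{max}]$; the upper bound $\sigma_\text{max}$ plays no role in the argument beyond ensuring the variances stay finite.
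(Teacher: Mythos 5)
Your proof is correct, but it takes a genuinely different route from the paper's. The paper invokes Lemma~1 of~\cite{xu2020adaptive}: it introduces an equivalent noise scale $\sigma_*$ for the heterogeneous mechanism via the sensitivity-weighted combination $\sigma_*^2 \leq (\Delta_2 f)^2 \big/ \sum_{i=1}^m (s_i^2/\sigma_i^2)$ and lower-bounds this by $\min_i \sigma_i^2$, thereby reducing $M'$ to a standard Gaussian mechanism whose effective noise is no smaller than $\sigma_\text{min}$. You instead use the convolution decomposition $\mathcal{N}(0,\sigma_i^2)=\mathcal{N}(0,\sigma_\text{min}^2)*\mathcal{N}(0,\sigma_i^2-\sigma_\text{min}^2)$ to write $M'(D)=M(D)+V$ with $V$ data-independent, and conclude by closure of $(\epsilon,\delta)$-DP under randomized post-processing. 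Your argument is more elementary and self-contained (it needs no external lemma and no sensitivity bookkeeping), and it cleanly isolates the one genuine hypothesis, namely $\sigma_i\geq\sigma_\text{min}$. What the paper's route buys in exchange is a quantitative handle: the equivalent $\sigma_*^2 = (\Delta_2 f)^2/\sum_i(s_i^2/\sigma_i^2)$ is an explicit number that the authors then feed into the Gaussian-mechanism calibration to \emph{compute} $\epsilon$ for a given $\delta$ in Section~\ref{4-C}, potentially yielding a tighter accounting than the worst-case $\sigma_\text{min}$ your reduction certifies. One caveat applies to both proofs equally: in the actual quantum setting the per-coordinate variances depend on the output state $\rho$ and hence implicitly on the data, so the claim that $V$ (or the $\sigma_i$) is data-independent is an idealization already baked into the theorem statement rather than something either proof establishes; you are right to flag it as the point requiring care.
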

\begin{proof}
    Using Lemma 1 in~\cite{xu2020adaptive}, We can introduce an equivalent $\sigma_*$ which satisfies
    \begin{equation}    {\sigma_*^{2}}\leq\min_i\sigma_{i}^{2}=\frac{(\Delta_2f)^2}{\frac{(\Delta_2f)^2}{\min_i\sigma_{i}^{2}}}=\frac{(\Delta_2f)^2}{\frac{\sum_{i=1}^{m}s_{i}^{2}}{\min_i\sigma_{i}^{2}}}\leq\frac{(\Delta_2f)^2}{\sum_{i=1}^{m}\frac{s_{i}^{2}}{\sigma_{i}^{2}}},
    \nonumber
    \end{equation}
    where $\sigma^2_i$ is the variance of the $i$-th gradient whose lower bound is $\sum\limits_{i=1}^K|\eta_i|^2h_\text{min}/2$ from Lemma~\ref{lemma:1} and $s_i$ is the $l_2$ sensitivity of the $i$-th gradient. Consequently, we can use the upper bound on $\sigma_*^2$ and Eq.~\eqref{eq:privacyloss} to compute optimal $\epsilon$ and $\delta$.
\end{proof}

\begin{figure*}[h]
    \centering
    \includegraphics[width=6in]{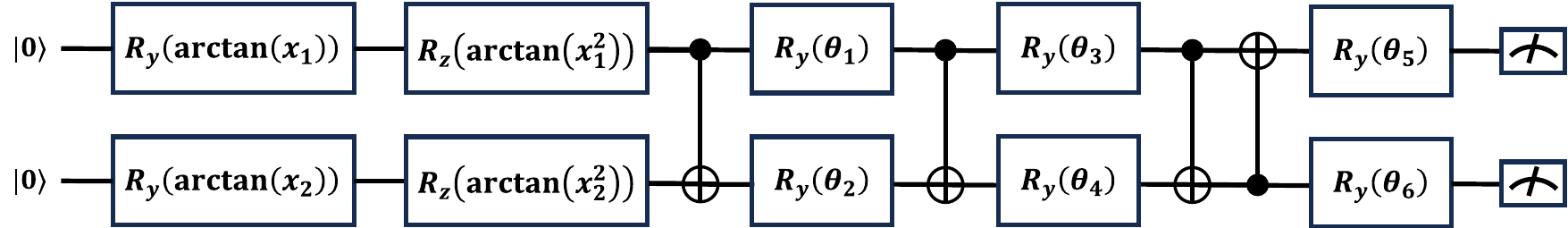}
    \caption{Quantum circuit for 2D classification of Iris dataset.}
    \label{fig0}
\end{figure*}

From Theorem \ref{theorem:1}, we can determine that each step of parameter update for QML is ($\epsilon$,$\delta$)-DP. Using the composition theorem of DP, we can compute the final $\epsilon$ and $\delta$ like~\cite{abadi2016deep} and the value of the gradient clipping parameter $C$ can take the maximum value of the $l_2$ norm of gradients by multiple pre-training.

\section{EVALUATION}\label{EVALUATION}
We conduct simulation experiments in this section to illustrate the impact of inherent noise on the privacy budget of QML across different error rates and shots.
We utilize Iris dataset to generate the binary classification dataset. We choose the Petal Length and Petal Width features of the Iris Setosa and Iris Versicolour categories with corresponding labels of $-1$ and $1$.
The QNN is constructed and trained utilizing open-source software packages \textit{Qiskit} and we implement PEC by \textit{Mitiq}. The structure of the QNN we use is shown in Fig.~\ref{fig0}. 
For better numerical stability and convergence, we squeeze all features onto the interval $[0,1]$. The observable we use is the tensor product of pauli matrix $I\otimes Z$. What's more, we use a modified hinge loss function and the proportion of training and testing is 80\% and 20\%. Optimization is performed using the gradient descent method with a learning rate set to 0.01. In the simulation experiments, if not specified, we assumed an error probability of 5\% for single qubit gates, 10\% for two qubit controlled gates.

\subsection{Impact of Incoherent Noise on QML}\label{4-A}
In this section we investigate how incoherent noise affects QML's test accuracy.
As shown in Fig.~\ref{fig1}, when there is unprocessed incoherent noise, QML's performance can be severely impacted, which is manifested in two-fold: (1) incoherent noise affects the speed of the model convergence, and (2) incoherent noise may cause the model fail to converge. Incoherent noise can lead to errors in the quantum computation, especially deep quantum circuits, which can propagate through the quantum circuit, affecting subsequent operations and measurements. This propagation can be particularly detrimental in complex quantum algorithms used in machine learning tasks. Therefore it is necessary to deal with incoherent noise in QML scenarios.
\begin{figure}[ht]
    \centering
    \includegraphics[width=3.2in]{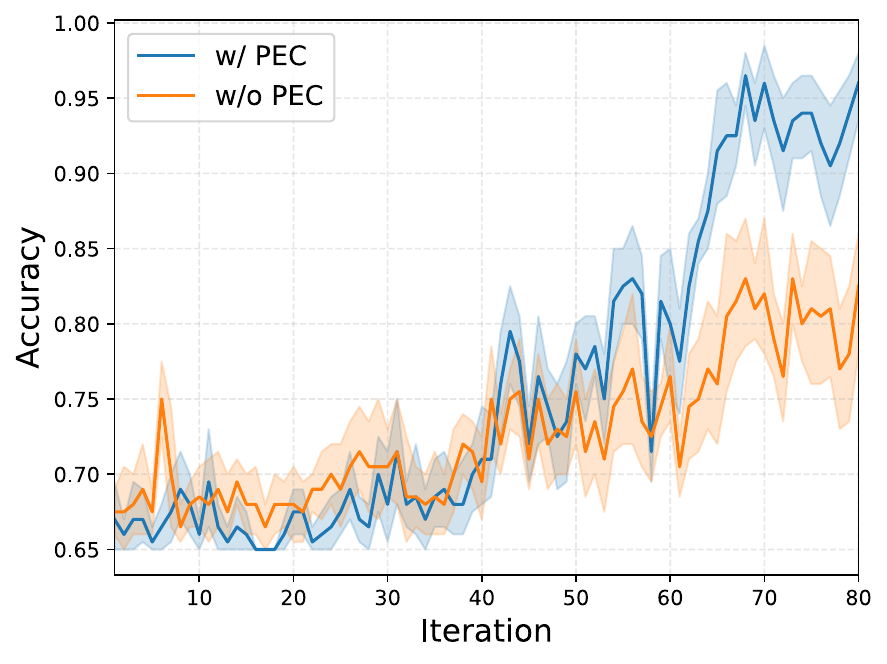}
    \caption{Comparison of test accuracy with and without the use of PEC.}
    \label{fig1}
\end{figure}
\subsection{Variance of the Gaussian Distribution Gradient}
According to Theorem~\ref{theorem:1}, we conduct experiment to elucidate how much variance can bring to the QNN's parameter gradients in this section. First, we fix the incoherent noise to calculate the lower bound on the standard deviation that different shots can bring to the gradient. As shown in Fig.~\ref{fig2}, the standard deviation of the gradient decreases as quantum circuits are executed more number of times, which is determined by Eq.~\eqref{upperandlower} in Lemma \ref{lemma:1}.
Next, when the quantum circuits are executed the same number of times, we consider the effect of the global depolarizing noise probability on the standard deviation's lower bound. As shown in Fig.~\ref{fig3}, when the noise probability increases, the lower bound on the standard deviation also increases, and therefore the gradient carries more noise and achieves better privacy preservation, which will be analyzed quantitatively in Section \ref{4-C}.
\begin{figure}[htbp]
    \centering
    \subfigure[]{
		\includegraphics[width=1.6in]{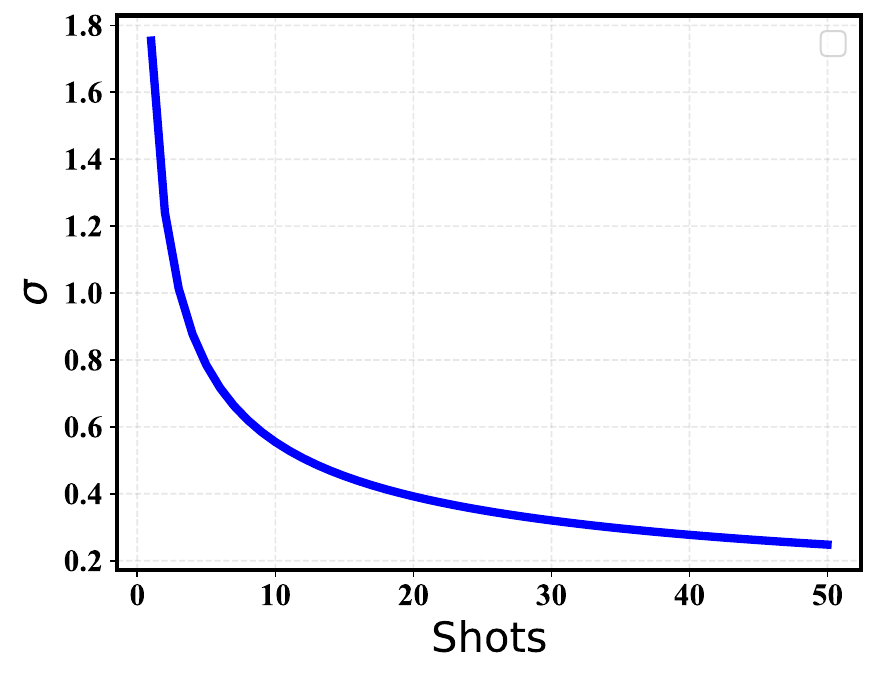}
        \label{fig2}}
    \hfill
    \centering
    \subfigure[]{
		\includegraphics[width=1.6in]{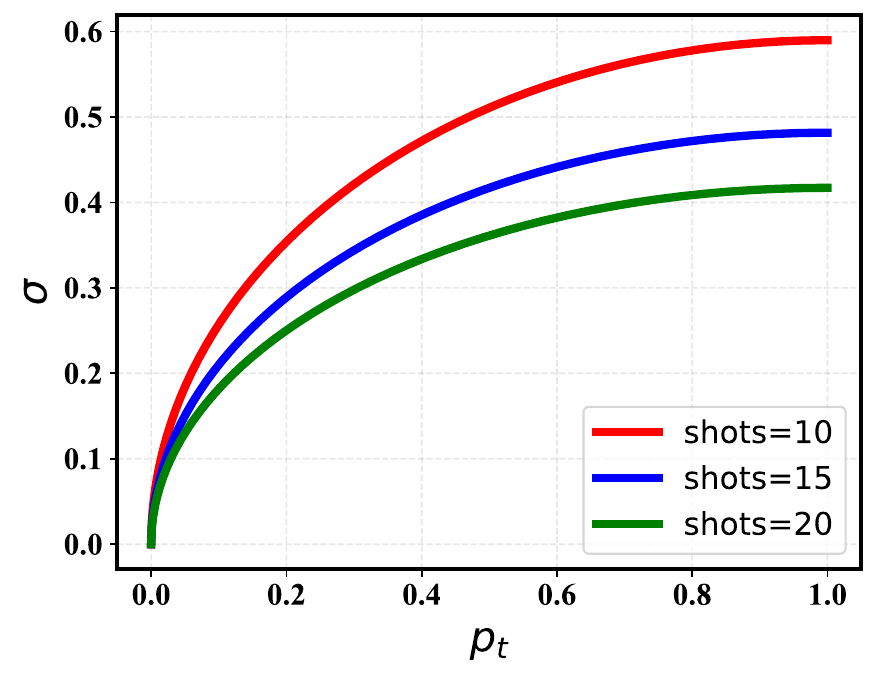}
        \label{fig3}}
        \caption{(a) The impact of shots on the standard deviation's lower bound. (b) The impact of global depolarizing noise probability on the standard deviation's lower bound.}
\end{figure}
\begin{figure}[h]
    \centering
    \subfigure[]{
		\includegraphics[width=1.6in]{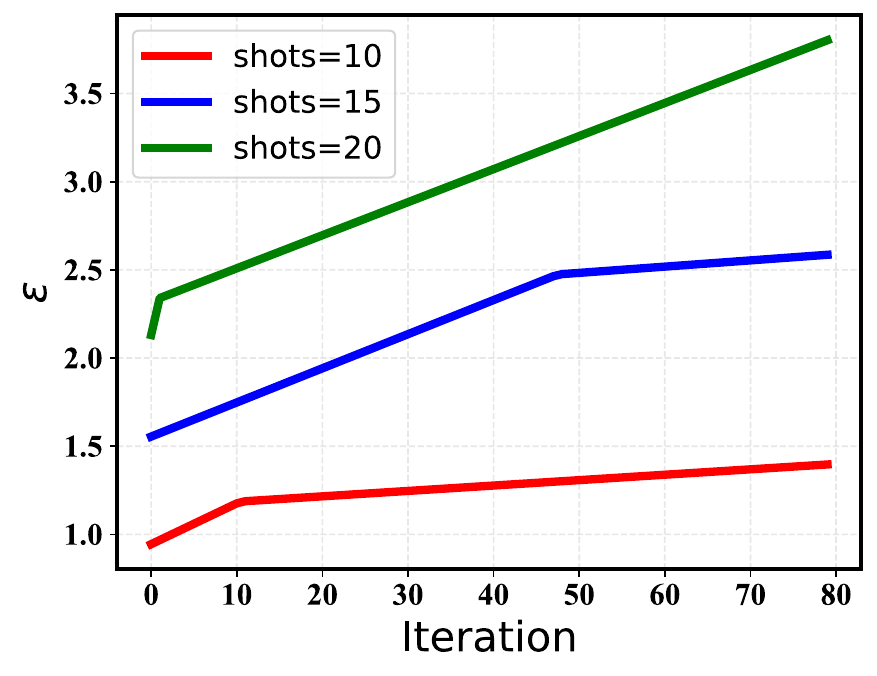}
        \label{fig4}}
    \hfill
    \centering
    \subfigure[]{
		\includegraphics[width=1.6in]{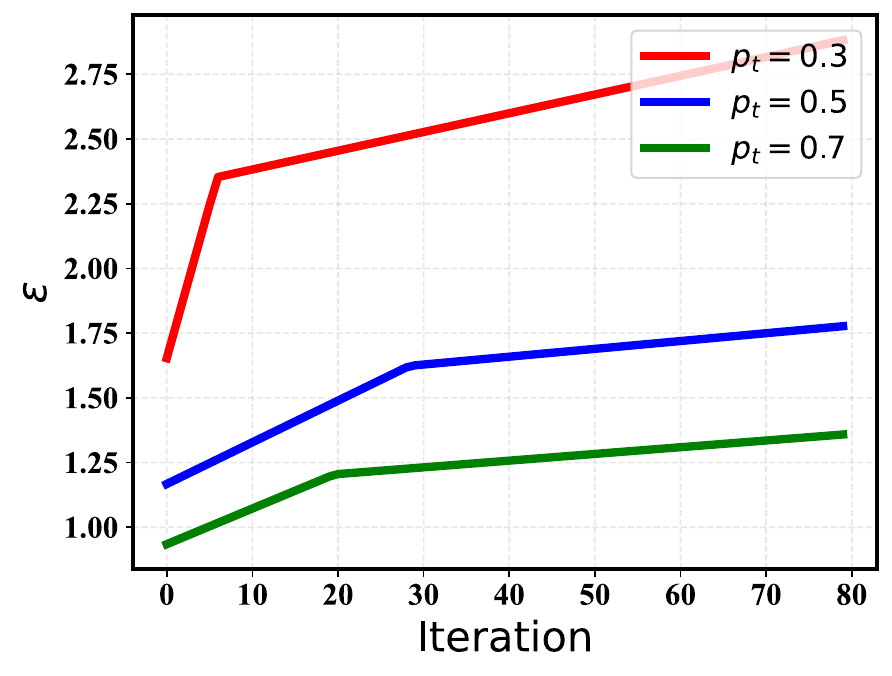}
        \label{fig5}}
        \caption{(a) The value of $\epsilon$ in different shots and iterations. (b) The value of $\epsilon$ in different global depolarizing error rates and iterations.}
\end{figure}

\subsection{Privacy Preserving QML}\label{4-C}
In this section, We analyze how much privacy budget $\epsilon$ can be achieved to a QML model by different shots and global depolarizing noise. In this experiments the default value of the privacy violation probability $\delta$ is set to 0.01 and the default value of the gradient clipping parameter $C$ is 0.7 based on experiments. We first analyze the relationship between shots and the privacy budget $\epsilon$. As shown in Fig.~\ref{fig4}, with iteration determined, the privacy budget $\epsilon$ increases when shots increase, this is because larger shots lead to a decrease of the gradients' standard deviation, and therefore an increase in privacy budget.
Next, we analyze the effect of the global depolarizing noise probability $p_t$ on the privacy budget $\epsilon$ under fixed 10 shots, which is shown in Fig.~\ref{fig5}. As the probability of noise increases, the privacy budget $\epsilon$ decreases. This is because larger noise probability leads to an increase in the lower bound on the standard deviation, which can better protect the model privacy.
\section{CONCLUTION AND FUTURE WORK}\label{Sec.5}
In this paper, we first introduce DP and the DP-SGD algorithm for preserving privacy and show that shot noise and incoherent noise can potentially achieve DP in QML. Next, we show that the output of a quantum circuit follows a Gaussian distribution under the influence of shot noise, incoherent noise, and the PEC method. What's more, we give an analytic expression for the variance of the Gaussian distribution. Finally, we prove that the gradients of the quantum circuit parameters also follow a Gaussian distribution when using a modified hinge loss function for a binary classification task, which can protect QML models' privacy.

However, we only considered PEC as a QEM method. In future work, we propose to study the achievable DP budget under different QEM methods. Moreover, we only studied QML under modified hinge loss function for the binary classification task. We will investigate how to generalize it to QML under arbitrary loss functions for multi-classification tasks. Furthermore, we used a lower bound on the variance in the privacy budget calculation, but the actual variance may depend on the specific task and quantum circuits. Given the QML task, it is worth studying how to precisely calculate the privacy budget and provide a tighter privacy bound.

\section*{Acknowledgments}
This work was supported by NSFC Project 61801045.

\bibliographystyle{IEEEtran}
\bibliography{./ref}

\begin{thebibliography}{10}
\providecommand{\url}[1]{#1}
\csname url@samestyle\endcsname
\providecommand{\newblock}{\relax}
\providecommand{\bibinfo}[2]{#2}
\providecommand{\BIBentrySTDinterwordspacing}{\spaceskip=0pt\relax}
\providecommand{\BIBentryALTinterwordstretchfactor}{4}
\providecommand{\BIBentryALTinterwordspacing}{\spaceskip=\fontdimen2\font plus
\BIBentryALTinterwordstretchfactor\fontdimen3\font minus \fontdimen4\font\relax}
\providecommand{\BIBforeignlanguage}[2]{{%
\expandafter\ifx\csname l@#1\endcsname\relax
\typeout{** WARNING: IEEEtran.bst: No hyphenation pattern has been}%
\typeout{** loaded for the language `#1'. Using the pattern for}%
\typeout{** the default language instead.}%
\else
\language=\csname l@#1\endcsname
\fi
#2}}
\providecommand{\BIBdecl}{\relax}
\BIBdecl

\bibitem{ramezani2020machine}
S.~B. Ramezani, A.~Sommers, H.~K. Manchukonda, S.~Rahimi, and A.~Amirlatifi, ``Machine learning algorithms in quantum computing: A survey,'' in \emph{2020 International joint conference on neural networks (IJCNN)}, Glasgow, UK, July 19--24, 2020, pp. 1--8.

\bibitem{ajagekar2019quantum}
A.~Ajagekar and F.~You, ``Quantum computing for energy systems optimization: Challenges and opportunities,'' \emph{Energy}, vol. 179, pp. 76--89, July 2019.

\bibitem{biamonte2017quantum}
J.~Biamonte, P.~Wittek, N.~Pancotti, P.~Rebentrost, N.~Wiebe, and S.~Lloyd, ``Quantum machine learning,'' \emph{Nature}, vol. 549, no. 7671, pp. 195--202, September 2017.

\bibitem{sengupta2021quantum}
K.~Sengupta and P.~R. Srivastava, ``Quantum algorithm for quicker clinical prognostic analysis: an application and experimental study using ct scan images of covid-19 patients,'' \emph{BMC Medical Informatics and Decision Making}, vol.~21, no.~1, pp. 1--14, December 2021.

\bibitem{dwork2014algorithmic}
C.~Dwork, A.~Roth \emph{et~al.}, ``The algorithmic foundations of differential privacy,'' \emph{Foundations and Trends{\textregistered} in Theoretical Computer Science}, vol.~9, no. 3--4, pp. 211--407, August 2014.

\bibitem{abadi2016deep}
M.~Abadi, A.~Chu, I.~Goodfellow, H.~B. McMahan, I.~Mironov, K.~Talwar, and L.~Zhang, ``Deep learning with differential privacy,'' in \emph{Proceedings of the 2016 ACM SIGSAC conference on computer and communications security}, October 24--28, 2016, pp. 308--318.

\bibitem{watkins2023quantum}
W.~M. Watkins, S.~Y.-C. Chen, and S.~Yoo, ``Quantum machine learning with differential privacy,'' \emph{Scientific Reports}, vol.~13, no.~1, p. 2453, February 2023.

\bibitem{cai2022quantum}
Z.~Cai, R.~Babbush, S.~C. Benjamin, S.~Endo, W.~J. Huggins, Y.~Li, J.~R. McClean, and T.~E. O'Brien, ``Quantum error mitigation,'' \emph{arXiv preprint arXiv:2210.00921}, 2022.

\bibitem{dwork2008dp}
C.~Dwork, ``Differential privacy: A survey of results,'' in \emph{International conference on theory and applications of models of computation}, Xi'an, China, April 2008, pp. 1--19.

\bibitem{ponomareva2023dp}
N.~Ponomareva, H.~Hazimeh, A.~Kurakin, Z.~Xu, C.~Denison, H.~B. McMahan, S.~Vassilvitskii, S.~Chien, and A.~G. Thakurta, ``How to dp-fy ml: A practical guide to machine learning with differential privacy,'' \emph{Journal of Artificial Intelligence Research}, vol.~77, pp. 1113--1201, July 2023.

\bibitem{billingsley2017probability}
P.~Billingsley, \emph{Probability and measure}.\hskip 1em plus 0.5em minus 0.4em\relax John Wiley \& Sons, 2017.

\bibitem{qin2023error}
D.~Qin, Y.~Chen, and Y.~Li, ``Error statistics and scalability of quantum error mitigation formulas,'' \emph{npj Quantum Information}, vol.~9, no.~1, p.~35, April 2023.

\bibitem{vovrosh2021simple}
J.~Vovrosh, K.~E. Khosla, S.~Greenaway, C.~Self, M.~S. Kim, and J.~Knolle, ``Simple mitigation of global depolarizing errors in quantum simulations,'' \emph{Physical Review E}, vol. 104, no.~3, p. 035309, September 2021.

\bibitem{mitarai2018quantum}
K.~Mitarai, M.~Negoro, M.~Kitagawa, and K.~Fujii, ``Quantum circuit learning,'' \emph{Physical Review A}, vol.~98, no.~3, p. 032309, September 2018.

\bibitem{xu2020adaptive}
Z.~Xu, S.~Shi, A.~X. Liu, J.~Zhao, and L.~Chen, ``An adaptive and fast convergent approach to differentially private deep learning,'' in \emph{IEEE INFOCOM 2020-IEEE Conference on Computer Communications}, Virtual Conference, July 2020, pp. 1867--1876.

\end{thebibliography}

\end{document}